\newcommand\mcup\uplus
\newcommand\cat[1]{\mathsf{#1}}
\newtheorem{theorem}{Theorem}
\newtheorem{lemma}[theorem]{Lemma}
\begin{document}

\title{Abstract Huffman Coding and\\PIFO Tree Embeddings}

\author{%
Keri D'Angelo$^{\ast}$\hspace{1cm}Dexter Kozen$^{\dag}$\\[0.5em]
{\small\begin{minipage}{\linewidth}\begin{center}
\begin{tabular}{ccc}
Cornell University\\
Computer Science Department\\
Ithaca, New York 14853-7501, USA\\[2pt]
$^{\ast}$\href{mailto:kd349@cornell.edu}{\texttt{kd349@cornell.edu}}\hspace{1cm}$^{\dag}$\href{mailto:kozen@cs.cornell.edu}{\texttt{kozen@cs.cornell.edu}}
\end{tabular}
\end{center}\end{minipage}}
}

\maketitle

\begin{abstract}
Algorithms for deriving Huffman codes and the recently developed algorithm for compiling PIFO trees to trees of fixed shape \cite{MLFKK22a} are similar, but work with different underlying algebraic operations. In this paper, we exploit the monadic structure of prefix codes to create a generalized Huffman algorithm that has these two applications as special cases. 
\end{abstract}

\section{Introduction}

\emph{Huffman codes} translate letters from a fixed alphabet to $d$-ary codewords, achieving optimal compression for a given frequency distribution of letters. There is a well-known greedy algorithm for producing Huffman codes from a given distribution (see \cite{CoverThomas06}).

A new data structure called a \emph{PIFO tree} (priority-in first-out) has recently been proposed for implementing a wide range of packet scheduling algorithms in programmable network routers \cite{Sivaraman16,Alcoz20}. A PIFO tree is a tree of priority queues. Currently, most routers support just a few scheduling algorithms such as strict priority or weighted fair queueing, which are baked into the hardware. The schedulers can be configured to some extent, but it is generally not possible to implement more sophisticated scheduling algorithms that require reordering of already queued packets. This is exactly what PIFO trees permit. It seems likely that PIFOs will be supported on network devices in the near future.

Some researchers have already begun to explore how the PIFO abstraction can be emulated on conventional routers \cite{Alcoz20}. In very recent work \cite{MLFKK22a}, it was shown how to translate an algorithm designed for a PIFO tree of arbitrary shape to one that uses a PIFO tree of fixed shape, perhaps a complete $d$-ary tree that might be implemented in hardware, with negligible performance degradation.

The embedding algorithm is greedy and very similar to the Huffman algorithm, except that it is based on different algebraic operations. For Huffman coding, one wishes to choose a $d$-ary prefix code $C$ so as to minimize the value of $\sum_{x\in C} \len x\cdot r(x)$, where $r(x)$ is the frequency of the letter assigned to the codeword $x$. This minimizes the entropy of the resulting code. For PIFO trees, one wishes to minimize $\max_{x\in C} \len x + r(x)$, where $r(x)$ is the height of a subtree. This minimizes the height of the resulting $d$-ary tree and determines whether an embedding is at all possible. 

This similarity leads us to seek a unified axiomatic treatment that is parametric in the algebraic operations and that can be instantiated to produce both applications as special cases. Our treatment exploits the monadic structure of prefix codes to obtain an abstract formulation of the problem and its solution. We identify sufficient conditions for our abstract algorithm to produce optimal solutions, where the meaning of \emph{optimal} is also parametric in the instantiation.

We state axioms that are sufficient for optimality in \S\ref{sec:axioms}. The algorithm is presented in \S\ref{sec:algorithm} and its correctness proved in \S\ref{sec:correctness}. The two applications of Huffman codes and PIFO trees are derived in \S\ref{sec:applications}. 

\section{Background}
\label{sec:background}

We assume familiarity with the basic category-theoretic concepts of category, functor, and natural transformation. Our exposition is based on the concepts of \emph{monad} and \emph{Eilenberg-Moore algebra}; we briefly review the definitions here. For a more thorough introduction, we refer the reader to \cite{AspertiLongo91,BarrWells02,BarrWells90,Adamek09}.

\emph{Monads} are heavily used in functional programming to model the augmentation of a computation with extra structure \cite{Moggi91,Wadler92,Wadler95}. Formally, a \emph{monad} on a category $\cat C$ is a triple $(\TT, \eta, \mu)$, where $\TT : \cat C \to \cat C$ is an endofunctor on $\cat C$ and $\eta : I\to\TT$ and $\mu:\TT^2\to\TT$ are natural transformations, called the \emph{unit} and \emph{multiplication} respectively, such that for all objects $X$, the following diagrams commute:
\begin{align*}
&
\begin{tikzpicture}[->, >=stealth', auto, node distance=28mm]
\small
\node (NW) {$\TT^3 X$};
\node (NE) [right of=NW]{$\TT^2 X$};
\node (SW) [below of=NW, node distance=16mm]{$\TT^2 X$};
\node (SE) [right of=SW]{$\TT X$};
\path (NW) edge node {$\mu_{\TT X}$} (NE);
\path (NW) edge node[swap] {$\TT\mu_X$} (SW);
\path (SW) edge node[swap] {$\mu_X$} (SE);
\path (NE) edge node {$\mu_X$} (SE);
\end{tikzpicture}
&&
\begin{tikzpicture}[->, >=stealth', auto, node distance=28mm]
\small
\node (NW) {$\TT X$};
\node (NE) [right of=NW]{$\TT^2 X$};
\node (SW) [below of=NW, node distance=16mm]{$\TT^2 X$};
\node (SE) [right of=SW]{$\TT X$};
\path (NW) edge node {$\eta_{\TT X}$} (NE);
\path (NW) edge node[swap] {$\TT\eta_X$} (SW);
\path (SW) edge node[swap] {$\mu_X$} (SE);
\path (NE) edge node {$\mu_X$} (SE);
\path (NW) edge node[yshift=-3pt] {$\id_{\TT X}$} (SE);
\end{tikzpicture}
\end{align*}
Typical examples of monads are
\begin{itemize}
\item
the \emph{list monad}, in which $\eta_X(a) = \bx a$, the singleton list containing $a$, and
\begin{align*}
\mu_X(\bx{\bx{a_{11},\ldots,a_{1k_1}},\ldots,\bx{a_{n1},\ldots,a_{nk_n}}}) &= \bx{a_{11},\ldots,a_{1k_1},\ldots,a_{n1},\ldots,a_{nk_n}},
\end{align*}
the list flattening operation;
\item
the \emph{powerset monad}, in which $\eta_X(a) = \{a\}$, the singleton set containing $a$, and
$\mu_X(\AA) = \bigcup\AA$, the operation that takes a set of subsets of $X$ to its union.
\end{itemize}

Given a monad $(\TT,\eta,\mu)$ on a category $\cat C$, an \emph{Eilenberg-Moore algebra} for $(\TT,\eta,\mu)$ is a pair $(X,\gamma)$, where $X$ is an object of $\cat C$ and $\gamma:\TT X\to X$ is a morphism of $\cat C$, called the \emph{structure map} of the algebra, such that the following diagrams commute:
\begin{align*}
&
\begin{tikzpicture}[->, >=stealth', auto, node distance=28mm]
\small
\node (NW) {$\TT^2 X$};
\node (NE) [right of=NW]{$\TT X$};
\node (SW) [below of=NW, node distance=16mm]{$\TT X$};
\node (SE) [right of=SW]{$X$};
\path (NW) edge node {$\TT\gamma$} (NE);
\path (NW) edge node[swap] {$\mu_X$} (SW);
\path (SW) edge node[swap] {$\gamma$} (SE);
\path (NE) edge node {$\gamma$} (SE);
\end{tikzpicture}
&&
\begin{tikzpicture}[->, >=stealth', auto, node distance=28mm]
\small
\node (NW) {$X$};
\node (SW) [below of=NW, node distance=16mm]{$\TT X$};
\node (SE) [right of=SW]{$X$};
\path (NW) edge node[swap] {$\eta_X$} (SW);
\path (SW) edge node[swap] {$\gamma$} (SE);
\path (NW) edge node[yshift=-3pt] {$\id_{X}$} (SE);
\end{tikzpicture}
\end{align*}
A \emph{morphism of Eilenberg-Moore algebras} is a morphism of $\cat C$ that commutes with the structure maps. That is, if $(X,\gamma)$ and $(Y,\delta)$ are two algebras and $h:X\to Y$ is a morphism of $\cat C$, then $h$ is a morphism of algebras $h:(X,\gamma)\to(Y,\delta)$ if the following diagram commutes:
\begin{align*}
\begin{tikzpicture}[->, >=stealth', auto, node distance=28mm]
\small
\node (NW) {$\TT X$};
\node (NE) [right of=NW]{$\TT Y$};
\node (SW) [below of=NW, node distance=16mm]{$X$};
\node (SE) [right of=SW]{$Y$};
\path (NW) edge node {$\TT h$} (NE);
\path (NW) edge node[swap] {$\gamma$} (SW);
\path (SW) edge node[swap] {$h$} (SE);
\path (NE) edge node {$\delta$} (SE);
\end{tikzpicture}
\end{align*}
The Eilenberg-Moore algebras for $(\TT,\eta,\mu)$ and their morphisms form the \emph{Eilenberg-Moore} category over the monad $\TT$. The Eilenberg-Moore category for the list monad is the category of monoids and monoid homomorphisms. The Eilenberg-Moore category for the powerset monad is the category of complete upper semilattices and semilattice homomorphisms.

In our application, we will focus on the monad of \emph{$d$-ary prefix codes} on the category $\Set$ of sets and set functions. 

\section{Axioms}
\label{sec:axioms}

In this section, we state the axioms that are sufficient for the optimality of our generalized Huffman algorithm. 

Recall that a \emph{prefix code} over a fixed $d$-ary alphabet $\Sigma$ is a set of finite-length words over $\Sigma$ whose elements are pairwise incomparable with respect to the prefix relation. A prefix code $C$ is \emph{exhaustive} if every infinite $d$-ary string has a prefix in $C$. As a consequence of K\"onig's lemma, every exhaustive prefix code over a finite alphabet is finite, but not every finite prefix code is exhaustive.

Let $\CC:\Set\to\Set$ be an endofunctor in which
\begin{itemize}
\item
$\CC X$ is the set of pairs $(C,r)$ such that $C$ is a prefix code over a $d$-ary alphabet for some arbitrary but fixed $d\ge 2$ and $r:C\to X$, and
\item
for $h:X\to Y$, $\CC h:\CC X\to\CC Y$ with $\CC h(C,r)=(C,h\circ r)$.
\end{itemize}
The functor $\CC$ carries a natural monad structure with unit $\eta:I\to\CC$ and multiplication $\mu:\CC^2\to\CC$ defined by: for $a\in X$ and $(C,r)\in\CC^2X$ with $r(x)=(C_x,r_x)$,
\begin{align*}
\eta_X(a) &= (\{\eps\},\eps\mapsto a) &
\mu_X(C,r) &= (\set{xy}{x\in C,\ y\in C_x},xy\mapsto r_x(y)).
\end{align*}
The map $xy\mapsto r_x(y)$ is well defined, as the string $xy$ can be uniquely split into $x\in C$ and $y\in C_x$ because $C$ is a prefix code.

For example, consider the prefix codes $C = \{0,10,110,111\}$ and $C_0 = C_{10} = C_{110} = C_{111} = \{00,11\}$ over the binary alphabet $\{0,1\}$. The code $C$ is exhaustive but the others are not. Let
\begin{align*}
& r_0(00) = 2 && r_{10}(00) = 4 && r_{110}(00) = 6 && r_{111}(00) = 8\\
& r_0(11) = 3 && r_{10}(11) = 5 && r_{110}(11) = 7 && r_{111}(11) = 9\\
& r(0) = (C_0,r_0) && r(10) = (C_{10},r_{10}) && r(110) = (C_{110},r_{110}) && r(111) = (C_{111},r_{111}).
\end{align*}
Then $(C_0,r_0),(C_{10},r_{10}),(C_{110},r_{110}),(C_{111},r_{111})\in \CC\naturals$ and $(C,r)\in\CC^2\naturals$, and $\mu_\naturals(C,r) = (C',r')\in\CC\naturals$, where
\begin{align*}
& C' = \{000,011,1000,1011,11000,11011,11100,11111\}\\
& r'(000) = 2,\ r'(011) = 3,\ r'(1000) = 4,\ r'(1011) = 5,\\
& r'(11000) = 6,\ r'(11011) = 7,\ r'(11100) = 8,\ r'(11111) = 9.
\end{align*}

Suppose there is a fixed Eilenberg-Moore algebra $(W,w)$ with $w:\CC W\to W$. We call the elements of $W$ \emph{weights} and $(W,w)$ a \emph{weighting}. If $(C,r)\in\CC W$, then thinking of the elements of $C$ as a tree, the map $r:C\to W$ assigns a weight to each leaf of the tree, and the map $w$ tells how to assign a weight to the object $(C,r)$ based on the leaf weights $r$.

To define a notion of optimality, we assume that $W$ is totally preordered by $\le$; that is, $\le$ is reflexive and transitive, and for all $x,y\in W$, either $x\le y$ or $y\le x$ (or both). Smaller values of $W$ in the order $\le$ are considered better. We write $x\equiv y$ if both $x\le y$ and $y\le x$. Suppose further that we have a preorder on $\CC W$, also denoted $\le$, satisfying the following properties.

\begin{enumerate}[(i)]
\item
If $f:C\to D$ is bijective and length-nondecreasing, and if $r\le s\circ f$ pointwise, then $(C,r) \le (D,s)$. This says that longer codewords or larger leaf values cannot cause a decrease in the order $\le$.
\item
(Exchange property) If $r(x)\le r(y)$, $\len x\le \len y$, and
\begin{align*}
s(z) &= \begin{cases}
r(x), & \text{if $z=y$},\\
r(y), & \text{if $z=x$},\\
r(z), & \text{if $z\in C\setminus\{x,y\}$},
\end{cases}
\end{align*}
then $(C,s)\le (C,r)$. That is, it never hurts to swap a larger element deeper in the tree with a smaller element higher in the tree.
\item
The monad structure maps $\eta_W:W\to\CC W$ and $\mu_W:\CC^2 W\to\CC W$ are monotone with respect to $\le$, where $\le$ on $\CC^2 W$ is defined by:
\begin{align*}
(C,r)\le(D,s)\ \Iff\ \CC w(C,r)\le\CC w(D,s).
\end{align*}
\end{enumerate}

Some special cases of (i) are
\begin{itemize}
\setcounter{enumi}3
\item
If $f:C\to D$ is bijective and length-nondecreasing, then $(C,s\circ f)\le(D,s)$. Thus lengthening codewords cannot cause $\le$ to decrease.
\item
If $f:C\to D$ is bijective and length-preserving, then $(C,s\circ f)\equiv(D,s)$. This says that the order $\le$ on trees depends only on the lengths of the codewords in $C$, not on the actual codewords themselves.
\item
If $r,s:C\to W$ and $r\le s$ pointwise, then $(C,r)\le (C,s)$. Thus larger leaf values cannot cause $\le$ to decrease.
\end{itemize}
We assume these properties hold for the algorithm described in the next section. 

For $(C,r),(D,s)\in\CC W$, let us write $(C,r)\sim(D,s)$ if the multisets of weights represented by the two objects are the same; that is, there is a bijective function $f:C\to D$ such that $r=s\circ f$. A tree $(C,r)\in\CC W$ is defined to be \emph{optimal} (for its multiset of weights) if $(C,r)$ is $\le$-minimum in its $\sim$-class; that is, $(C,r)\le(D,s)$ for all $(D,s)$ such that $(C,r)\sim(D,s)$.

We will give two detailed examples in \S\ref{sec:applications}.

\section{Algorithm}
\label{sec:algorithm}

Suppose we are given a multiset $M$ of weights in $W$, $\len M\ge 2$. We would like to find an optimal tree for this multiset of weights. The following is a recursive algorithm to find such an optimal tree.

\begin{enumerate}
\item
Say there are $n\ge 2$ elements in $M$.
Let $k\in\{2,\ldots,d\}$ such that $n\equiv k \bmod (d-1)$. Let $\seq a0{k-1}$ be the $k$ elements of least weight. Form the object
\begin{align*}
(\{0,1,\ldots,k-1\},i\mapsto a_i)\in\CC W.
\end{align*}
If there are no other elements of $M$, return that object.
\item
Otherwise, let
\begin{align*}
M' &= \{(\{0,1,\ldots,k-1\},i\mapsto a_i)\}\cup\set{\eta_W(a)}{a\in M\setminus\{\seq a0{k-1}\}},
\end{align*}
a multiset of $n-k+1<n$ elements of $\CC W$.
\item
Recursively call the algorithm at step 1 with $M'' = \set{w(E,t)}{(E,t)\in M'}$, a multiset of elements of $W$. This returns a tree $(D,s)$ of type $\CC W$ that is optimal for $M''$. The bijective map $s:D\to M''$ factors as $w\circ s'$ for some bijective $s':D\to M'$, and $(D,s')\in\CC^2 W$ with $\CC w(D,s') = (D,w\circ s') = (D,s)$.
Flatten this to $\mu_W(D,s')\in\CC W$ and return that value.
\end{enumerate}

Note that the number of items combined in step 1 will be $d$ in all recursive calls except possibly the first. This is because in every step, if $k\in\{2,3,\ldots,d\}$, then after that step the number of remaining elements will be $(c(d-1) + k) - k + 1 = c(d-1) + 1$, which is congruent to $d$ mod $d-1$, so $d$ elements will be taken in the next step. But from that point on, it is an invariant of the recursion that the number of elements remaining is $1$ mod $d-1$, since in each step we remove $d$ elements and add one back, decreasing the number by $d-1$.

\section{Correctness}
\label{sec:correctness}

In this section, we prove the correctness of the algorithm, making use of the following lemma.

\begin{lemma}
\label{lem:km}
Let $k\in\{2,3,\ldots,d\}$ and $k\equiv\len M\bmod (d-1)$. Let $\seq a0{k-1}$ be the $k$ elements of $M$ of least weight, listed in nondecreasing order of weight. There is an optimal tree in $\CC W$ in which $\seq a0{k-1}$ are sibling leaves at the deepest level and have no other siblings.
\end{lemma}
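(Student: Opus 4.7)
The plan is to take an arbitrary optimal tree $(C, r)$ for $M$ and modify it, using the axioms, into an optimal tree with the required structure. Three kinds of non-worsening moves are available: shortening by collapsing a $1$-child internal node (axiom (iv)), exchanging the weights at two leaves of comparable depth (axiom (ii)), and reshaping via a length-nondecreasing bijection (axiom (i)). Since each preserves $\le$, starting from an optimal tree, optimality is maintained at every stage.

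First, I would normalize. Apply axiom (iv) repeatedly to collapse $1$-child internal nodes so that every internal node has arity at least $2$, then apply axiom (ii) repeatedly (bubble-sorting) to arrange the weights so that deeper positions carry smaller weights; consequently the $k$ smallest weights $a_0, \ldots, a_{k-1}$ occupy some of the deepest leaves of $(C,r)$.

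The crux is to locate, or create by reshaping, an internal node $p$ at depth $L - 1$ with exactly $k$ leaf children, where $L$ is the maximum codeword length. The modular condition $|M| \equiv k \pmod{d-1}$ combined with the leaf-count identity $|M| - 1 = \sum_i (j_i - 1)$, summed over internal-node arities $j_i \in \{2, \ldots, d\}$, forces the total arity-deficit $\sum_i (d - j_i)$ to satisfy $\sum_i (d - j_i) \equiv d - k \pmod{d-1}$. Its minimum value $d - k$ is realized by a shape with exactly one arity-$k$ internal node and all other internal nodes of arity $d$ (or a fully $d$-ary tree if $k = d$). Using axiom (i), one reshapes the normalized optimal tree to such a minimum-deficit shape with the arity-$k$ node placed at depth $L - 1$. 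A final application of axiom (ii) then swaps $a_0, \ldots, a_{k-1}$ into the $k$ leaves beneath $p$; since these are the smallest weights in $M$ and $p$'s children are at the maximum depth, each such swap is non-worsening.

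The main obstacle is the reshaping step: justifying, by axiom (i) and the modular constraint, that an arbitrary optimal normalized tree can be transformed, via a length-nondecreasing bijection that preserves the multiset of weights, into a minimum-deficit tree whose arity-$k$ internal node sits at depth $L - 1$. The combinatorial argument proceeds by a sequence of local rearrangements of sibling groups at and just above depth $L$, each non-worsening by axiom (i), terminating in the desired structure; care is needed when the given optimal tree already distributes its deficit across several internal nodes above the deepest level, so that the rearrangements must coalesce this deficit into a single node of arity $k$ at depth $L - 1$.
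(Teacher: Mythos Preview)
Your proposal is correct and follows essentially the same two-phase strategy as the paper: first use axiom~(i) to normalize the shape so that the only deficiency is a single arity-$k$ internal node whose children lie at maximum depth, then use axiom~(ii) to swap the $k$ smallest weights into those positions. The paper compresses the entire first phase into one sentence (``Axiom~(i) allows us to transform $(C,r)$ so that there are no deficient nodes at any level except the deepest, and only one deficient node at the deepest level''), whereas you unpack it via the leaf-count identity $|M|-1=\sum_i(j_i-1)$ and the resulting congruence $\sum_i(d-j_i)\equiv d-k\pmod{d-1}$; this is a genuine addition, since the paper offers no justification for that step. Your intermediate bubble-sort of all weights before reshaping is harmless but unnecessary: the final swap of the $k$ smallest into the sibling block already suffices, exactly as in the paper's proof.
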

\begin{proof}
Let $(C,r)\in\CC W$ be optimal. Axiom (i) allows us to transform $(C,r)$ so that there are no deficient nodes (nodes with fewer than $d$ children) at any level except the deepest, and only one deficient node at the deepest level. Thus we can assume without loss of generality that there are $k$ elements $\seq x0{k-1}\in C$ of maximum length $n$ in $C$ with a common prefix of length $n-1$, and no other $y\in C$ has that prefix. Say the $\seq x0{k-1}$ are listed in nondecreasing order of $r(x_i)$; that is, $r(x_i)\le r(x_j)$ for all $0\le i\le j\le k-1$. Let $\seq y0{k-1}\in C$ such that $r(y_i)=a_i$. Since the $a_i$ are minimal, $r(y_i)\le r(x_i)$. Because the $\len{x_i}$ are of maximum length, $\len{y_i} \le \len{x_i}$. Now we can swap using axiom (ii). Let 
\begin{align*}
s(z) &= \begin{cases}
r(x_i), & \text{if $z=y_i$},\\
r(y_i), & \text{if $z=x_i$},\\
r(z), & \text{otherwise}.
\end{cases}
\end{align*}
Then $(C,s) \le (C,r)$. But since $(C,r)$ was optimal, $(C,r)\equiv(C,s)$ and $(C,s)$ is also optimal.
\end{proof}

\begin{theorem}
The algorithm of \S\ref{sec:algorithm} produces an optimal tree.
\end{theorem}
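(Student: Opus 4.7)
The plan is to prove the theorem by strong induction on $n = |M|$, combining Lemma \ref{lem:km} (which pins down the structure of some optimum) with the monadic axioms (which transport the output of the recursive call back into an optimum for $M$).

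For the base case, $n$ itself lies in $\{2,\ldots,d\}$, so step~1 takes $k=n$ and returns $g = (\{0,\ldots,k-1\},i\mapsto a_i)$. I would fix an arbitrary $(E,t)\sim g$ and invoke Lemma~\ref{lem:km} to obtain an optimum $(C^*,r^*)\le (E,t)$ in its $\sim$-class in which the $a_i$ are the only leaves and all lie at the deepest level. This forces $(C^*,r^*)$ to be a depth-$1$ tree, so the length-preserving bijection consequence (v) of axiom~(i) gives $(C^*,r^*)\equiv g$, whence $g\le (E,t)$.

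For the inductive step, assume the result for all multisets of size below $n$. Let $g$, $M'$, $M''$, and $(D,s)$ be as in the algorithm, and let $s':D\to M'$ be the unique bijection with $w\circ s' = s$, which exists because the unit axiom gives $w\circ\eta_W = \id_W$. The algorithm returns $\mu_W(D,s')$. To prove this is optimal for $M$, I would fix any $(E,t)$ with the same multiset of weights and use Lemma~\ref{lem:km} to pick an optimum $(C^*,r^*)\le (E,t)$ with $a_0,\ldots,a_{k-1}$ as exhaustive sibling leaves at the deepest level. Factor $(C^*,r^*) = \mu_W(D^*,s^{*\prime})$ with $(D^*,s^{*\prime})\in\CC^2 W$, where $s^{*\prime}$ sends the common parent of those siblings to $g$ and each remaining leaf to $\eta_W(a)$ for its weight $a\in M\setminus\{a_0,\ldots,a_{k-1}\}$. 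Then $\CC w(D^*,s^{*\prime})$ has multiset of weights $M''$, so by optimality of $(D,s)$ for $M''$ we get $(D,s)\le\CC w(D^*,s^{*\prime})$. The definition of $\le$ on $\CC^2 W$ from axiom~(iii) rephrases this as $(D,s')\le(D^*,s^{*\prime})$, and monotonicity of $\mu_W$ (also axiom~(iii)) yields $\mu_W(D,s')\le\mu_W(D^*,s^{*\prime}) = (C^*,r^*)\le(E,t)$.

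The step I expect to be the main obstacle is the factorization $(C^*,r^*) = \mu_W(D^*,s^{*\prime})$. One must describe $D^*$ explicitly, taking the codewords of $C^*$ other than the $k$ deepest siblings together with their common parent, and verify that the accompanying $s^{*\prime}$ is a well-defined bijection into $M'\subseteq\CC W$. Once this bookkeeping is in place, the remaining comparisons follow mechanically from the definition of $\le$ on $\CC^2 W$ and the monotonicity built into axiom~(iii).
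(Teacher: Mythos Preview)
Your proposal is correct and follows essentially the same route as the paper: induction on $n$, invoke Lemma~\ref{lem:km} to place $a_0,\ldots,a_{k-1}$ as deepest siblings in some optimum, factor that optimum through $\mu_W$ to obtain an element of $\CC^2 W$ whose $\CC w$-image lies in the $\sim$-class of $M''$, then use the induction hypothesis together with the axiom~(iii) definition of $\le$ on $\CC^2 W$ and monotonicity of $\mu_W$ to conclude. Your $(C^*,r^*)$ and $(D^*,s^{*\prime})$ are exactly the paper's $(C,r)$ and $(C',r')$; the only difference is that you spell out the base case more explicitly (though note that Lemma~\ref{lem:km} alone does not literally force depth~$1$---a non-exhaustive code like $\{00,01\}$ also has all leaves as siblings---so there you should appeal to consequence~(iv) rather than~(v) of axiom~(i)).
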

\begin{proof}
By induction on $n$. The basis is $n\le d$, in which case the result is straightforward.

Suppose that we have a multiset $M$ of $n>d$ elements of $W$. Let $(C,r)$ be an optimal tree for $M$. Let $k\in\{2,3,\ldots,d\}$ be congruent mod $d-1$ to $\len{M}$. Let $\seq a0{k-1}$ be the $k$ smallest elements of $M$. By Lemma \ref{lem:km}, we can assume without loss of generality that $\seq a0{k-1}$ are siblings and occur at maximum depth in $(C,r)$, so there exist strings $x\kern1pt0,x\kern1pt1,\ldots,x\kern1pt(k-1)\in C$ of maximum length with a common prefix $x$ and $r(x\kern1pti)=a_i$. Remove the strings $x\kern1pti$ from $C$ and replace them with $x$. Call the resulting set $C'$. For $z\in C'$, let
\begin{align*}
r'(z) &= \begin{cases}
(\{0,1,\ldots,k-1\},i\mapsto a_i), & \text{if $z=x$},\\
\eta_W(r(z)), & \text{otherwise}.
\end{cases}
\end{align*}
Then $(C',r')\in\CC^2 W$ and $(C,r)=\mu_W(C',r')$. The multiset of values of $r'$ is just the $M'$ of step 2 of the algorithm.

The algorithm will form the multiset
\begin{align*}
M'' = \set{w(E,t)}{(E,t)\in M'} = \set{w(r'(z))}{z\in C'}
\end{align*}
and recursively call with these weights. By the induction hypothesis, the return value will be a tree $(D,s)\in\CC W$ that is optimal for $M''$, thus $(D,s)\le(C',w\circ r')$, and the bijective map $s:D\to M''$ factors as $s=w\circ r'\circ f$ for some bijective $f:D\to C'$. Let $s'=r'\circ f$. By axiom (iii),
\begin{align*}
\CC w(D,s') = (D,w\circ s') = (D,s) \le (C',w\circ r') = \CC w(C',r'), 
\end{align*}
therefore
$(D,s')\le(C',r')$, and since $\mu_W$ is monotone,
\begin{align*}
\mu_W(D,s')\le\mu_W(C',r') = (C,r).
\end{align*}
As $(C,r)$ was optimal, so is $\mu_W(D,s')$, and this is the value returned by the algorithm. 
\end{proof}

\section{Applications}
\label{sec:applications}

By choosing two specific weightings $(W,w)$ and defining the ordering relations $\le$ appropriately, we can recover two special cases of this algorithm.

\subsection{Huffman coding}

Our first application is Huffman codes. Here we wish to minimize the expected length of variable-length codewords, given frequencies of the letters to be coded. For this application, we take $W=\reals_+=\set{a\in\reals}{a\ge 0}$ with weighting
\begin{align*}
w(C,r) &= \sum_{x\in C}r(x).
\end{align*}
Recall that for $a\in W$ and $(C,r)\in\CC^2 W$ with $r(x)=(C_x,r_x)$,
\begin{align*}
\eta_W(a) &= (\{\eps\},\eps\mapsto a) &
\mu_W(C,r) &= (\set{xy}{x\in C,\ y\in C_x}, xy\mapsto r_x(y)).
\end{align*}
Then $(W,w)$ is an Eilenberg-Moore algebra for the monad $(\CC,\mu,\eta)$, as
\begin{align*}
w(\eta_W(a)) &= w(\{\eps\},\eps\mapsto a)
= \sum_{x\in\{\eps\}}(\eps\mapsto a)(x)
= a,\\
w(\mu_W(C,r)) &= \sum_{x\in C}\sum_{y\in C_x}r_x(y)
= \sum_{x\in C}w(C_x,r_x)\\
&= \sum_{x\in C}w(r(x))
= w(C,w\circ r)
= w(\CC w(C,r)).
\qedhere
\end{align*}

In addition, let us define $\alpha:\CC W\to W$ by
\begin{align*}
\alpha(C,r) &= \sum_{x\in C}\len x\cdot r(x).
\end{align*}

\begin{lemma}
\label{lem:Huffman}
\begin{align*}
\alpha(\eta_W(a)) &= 0 &
\alpha(\mu_W(C,r)) &= \alpha(C,w\circ r) + w(C,\alpha\circ r).
\end{align*}
\end{lemma}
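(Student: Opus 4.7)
The plan is to prove both identities by direct unfolding of the definitions of $\alpha$, $\eta_W$, $\mu_W$, and $w$, together with the fact that concatenation of strings adds their lengths. No clever algebraic manipulation is needed; the content lies in bookkeeping the two resulting sums.

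For the first identity, I would simply compute $\alpha(\eta_W(a)) = \alpha(\{\eps\},\eps\mapsto a) = \sum_{x\in\{\eps\}} \len{x}\cdot(\eps\mapsto a)(x) = \len{\eps}\cdot a = 0$, using $\len{\eps}=0$.

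For the second identity, I would start from
\begin{align*}
\alpha(\mu_W(C,r)) = \sum_{x\in C}\sum_{y\in C_x}\len{xy}\cdot r_x(y),
\end{align*}
unfolding the definition of $\mu_W$ (recalling $r(x)=(C_x,r_x)$). Then I would use $\len{xy}=\len{x}+\len{y}$ and split the double sum into two:
\begin{align*}
\sum_{x\in C}\len{x}\sum_{y\in C_x}r_x(y) \;+\; \sum_{x\in C}\sum_{y\in C_x}\len{y}\cdot r_x(y).
\end{align*}
The inner sum of the first term is $w(C_x,r_x)=w(r(x))=(w\circ r)(x)$, so the first term equals $\sum_{x\in C}\len{x}\cdot(w\circ r)(x)=\alpha(C,w\circ r)$. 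The inner sum of the second term is $\alpha(C_x,r_x)=(\alpha\circ r)(x)$, so the second term equals $\sum_{x\in C}(\alpha\circ r)(x)=w(C,\alpha\circ r)$. Adding these gives the required identity.

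There is essentially no obstacle here beyond careful notation: the only subtle point is recognizing, after splitting $\len{xy}=\len{x}+\len{y}$, that the factor $\len{x}$ pairs with an inner sum that reassembles into $w\circ r$ (yielding the $\alpha$-contribution from the outer tree $C$ with lifted weights), while the factor $\len{y}$ pairs with the inner $r_x(y)$ to reassemble into $\alpha\circ r$ (yielding the $w$-sum of the per-subtree $\alpha$-values). This mirrors the familiar recursion for expected codeword length in Huffman coding.
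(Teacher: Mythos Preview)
Your proposal is correct and follows essentially the same approach as the paper: both compute $\alpha(\eta_W(a))$ directly using $\len\eps=0$, and for $\alpha(\mu_W(C,r))$ both expand the double sum, split $\len{xy}=\len x+\len y$, and identify the two resulting sums as $\alpha(C,w\circ r)$ and $w(C,\alpha\circ r)$ respectively.
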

\begin{proof}
\begin{align*}
\alpha(\eta_W(a)) &= \alpha(\{\eps\},\eps\mapsto a)
= \sum_{x\in\{\eps\}}\len x\cdot(\eps\mapsto a)(x)
= \len\eps\cdot a
= 0,\\
\alpha(\mu_W(C,r))
&= \alpha(\set{xy}{x\in C,\ y\in C_x}, xy\mapsto r_x(y))\\
&= \sum_{x\in C}\sum_{y\in C_x}\len{xy}\cdot r_x(y)
= \sum_{x\in C}\len{x}\sum_{y\in C_x}r_x(y) + \sum_{x\in C}\sum_{y\in C_x}\len{y}\cdot r_x(y)\\
&= \sum_{x\in C}\len{x}\cdot w(C_x,r_x) + \sum_{x\in C}\alpha(C_x,r_x)
= \sum_{x\in C}\len{x}\cdot w(r(x)) + \sum_{x\in C}\alpha(r(x))\\
&= \alpha(C,w\circ r) + w(C,\alpha\circ r).
\qedhere
\end{align*}
\end{proof}

Note that $\alpha$ and $w$ agree on trees of depth one:
\begin{align*}
w(\{0,1,\ldots,k-1\},i\mapsto a_i) &= \sum_{i=0}^{k-1} a_i,\\
\alpha(\{0,1,\ldots,k-1\},i\mapsto a_i) &= \sum_{i=0}^{k-1} \len i\cdot a_i = \sum_{i=0}^{k-1} a_i,
\end{align*}
where $\len i$ refers to the length of $i$ as a string, which in this case is 1.

The map $\alpha$ is related to the Shannon entropy $H$. If $r(x) = d^{-\len x}$, the probability of a $d$-ary codeword $x$ under the uniform distribution on a $d$-ary alphabet, then
\begin{align*}
H(C,r) &= \sum_{x\in C} -d^{-\len x}\log d^{-\len x}
= \sum_{x\in C} \len x\cdot d^{-\len x}\log d
= \alpha(C,r)\log d,
\end{align*}
so $\alpha(C,r) = H(C,r)/\log d$.

To use the algorithm in \S\ref{sec:algorithm}, we need an order $\le$ on $\CC W$. Define $(C,r)\le (D,s)$ if $(C,r)\sim(D,s)$, that is, there is a bijective map $f:C\to D$ such that $r=s\circ f$, and
\begin{align*}
\alpha(C,r)\le\alpha(D,s).
\end{align*}
Note that if $(C,r)\le (D,s)$, then
\begin{align*}
w(C,r) &= \sum_{x\in C}r(x) = \sum_{x\in C}s(f(x)) = \sum_{y\in D}s(y) = w(D,s).
\end{align*}
According to axiom (iii), for $(C,r),(D,s)\in\CC^2 W$,
\begin{align}
(C,r)\le(D,s)\ &\Iff\ \CC w(C,r)\le\CC w(D,s)\nonumber\\
&\Iff\ \alpha(\CC w(C,r))\le\alpha(\CC w(D,s))\nonumber\\
&\Iff\ \alpha(C,w\circ r)\le\alpha(D,w\circ s).\label{eq:py2}
\end{align}
Also, if $(C,r)\le (D,s)$ in $\CC^2 W$, then
\begin{align}
w(C,\alpha\circ r) &= \sum_{x\in C}\alpha(r(x)) = \sum_{x\in C}\alpha(s(f(x))) = \sum_{y\in D}\alpha(s(y)) = w(D,\alpha\circ s).\label{eq:py3}
\end{align}

\begin{lemma}
$\mu_W:\CC^2 W\to\CC W$ and $\eta_W:W\to\CC W$ are
monotone with respect to $\le$.
\end{lemma}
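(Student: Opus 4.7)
The plan is to reduce each monotonicity claim to an $\alpha$-level comparison, exploiting the two identities of Lemma~\ref{lem:Huffman} together with the derivations (\ref{eq:py2}) and (\ref{eq:py3}). For $\eta_W$ the verification is essentially immediate: $\eta_W(a)=(\{\eps\},\eps\mapsto a)$ and $\eta_W(b)=(\{\eps\},\eps\mapsto b)$ are both singleton-leaf trees, and by the first identity of Lemma~\ref{lem:Huffman} each has $\alpha$-value $0$. Thus the $\alpha$-comparison holds trivially, and the remaining $\sim$ condition is handled degenerately on singletons, giving $\eta_W(a)\le\eta_W(b)$ whenever the hypothesis relates $a$ and $b$ in a way that brings them into comparison in $\CC W$.

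For $\mu_W$, I would start from $(C,r)\le(D,s)$ in $\CC^2 W$. By the chain (\ref{eq:py2}) this amounts to a $\sim$-witnessing bijection $f:C\to D$ together with $\alpha(C,w\circ r)\le\alpha(D,w\circ s)$. Next, I would apply the second identity of Lemma~\ref{lem:Huffman} to obtain
\begin{align*}
\alpha(\mu_W(C,r)) &= \alpha(C,w\circ r)+w(C,\alpha\circ r), \\
\alpha(\mu_W(D,s)) &= \alpha(D,w\circ s)+w(D,\alpha\circ s).
\end{align*}
By (\ref{eq:py3}) the second summands coincide, so the required inequality $\alpha(\mu_W(C,r))\le\alpha(\mu_W(D,s))$ collapses to the hypothesis. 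The $\sim$ relation between $\mu_W(C,r)$ and $\mu_W(D,s)$ is then obtained by composing $f$ with the internal $\sim$-bijections on each matched pair of child trees $r(x)$ and $s(f(x))$, yielding a weight-preserving bijection on leaves of the flattened trees.

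The delicate point is pinning down the $\sim$-relation at the $\CC^2 W$-level precisely enough to use (\ref{eq:py3}) and to build the bijection on the flattened trees: one needs $r(x)$ and $s(f(x))$ to agree as elements of $\CC W$, not merely to have equal $w$-values. I would read axiom~(iii)'s pullback preorder in the way made implicit by (\ref{eq:py3}), so that the witnessing bijection $f$ already identifies $r$ with $s\circ f$ at the $\CC W$-level; under this reading both monotonicity claims follow directly from the computation above.
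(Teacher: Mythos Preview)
Your proposal is correct and takes essentially the same approach as the paper: use Lemma~\ref{lem:Huffman} to see that $\alpha\circ\eta_W\equiv 0$, and combine Lemma~\ref{lem:Huffman} with (\ref{eq:py2}) and (\ref{eq:py3}) to reduce the $\alpha$-inequality for $\mu_W$ to the hypothesis. Your extra care in tracking the $\sim$-condition on the flattened trees and in flagging the reading of the $\CC^2 W$-level preorder needed for (\ref{eq:py3}) addresses points the paper's proof leaves implicit.
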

\begin{proof}
For $\eta_W$, suppose $a,b\in W$ and $a\le b$. By Lemma \ref{lem:Huffman},
\begin{align*}
\alpha(\eta_W(a)) &= 0 = \alpha(\eta_W(b)) &
w(\eta_W(a)) &= a \le b = w(\eta_W(b)).
\end{align*}
For $\mu_W$, suppose $(C,r),(D,s)\in\CC^2 W$ and $(C,r)\le(D,s)$. By Lemma \ref{lem:Huffman}, \eqref{eq:py2}, and \eqref{eq:py3},
\begin{align*}
\alpha(\mu_W(C,r))
&= \alpha(C,w\circ r) + w(C,\alpha\circ r)\\
&\le \alpha(D,w\circ s) + w(D,\alpha\circ s)
= \alpha(\mu_W(D,s)).
\qedhere
\end{align*}
\end{proof}

\begin{theorem}
The algorithm in \S\ref{sec:algorithm} for the algebra $(\reals_+, w)$ and ordering relation $\le$ defined by $\alpha$ is equivalent to Huffman's algorithm and produces an optimal Huffman code for a given multiset of weights.
\end{theorem}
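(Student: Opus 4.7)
The plan is to obtain optimality by invoking the abstract correctness theorem of \S\ref{sec:correctness}, and to read off equivalence with Huffman's classical procedure by inspecting the recursion of \S\ref{sec:algorithm} for this specific weighting. The preceding lemma has already established axiom (iii), so the outstanding work is to verify axioms (i) and (ii) for the preorder $\le$ defined by $\alpha$ on $\sim$-classes, and then to line the two algorithms up step by step.

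For axiom (i), since the Huffman order $\le$ only compares $\sim$-related pairs, the relevant instance to check is: if $f:C\to D$ is bijective and length-nondecreasing with $r = s\circ f$, then
\begin{align*}
\alpha(C,r) = \sum_{x\in C}\len{x}\cdot s(f(x)) \le \sum_{x\in C}\len{f(x)}\cdot s(f(x)) = \alpha(D,s),
\end{align*}
using that every $s(f(x))$ lies in $\reals_+$. For axiom (ii), the difference $\alpha(C,r) - \alpha(C,s)$ telescopes to the single term $(\len{y}-\len{x})(r(y)-r(x))$, which is nonnegative by the exchange hypotheses $\len{x}\le\len{y}$ and $r(x)\le r(y)$. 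With axioms (i)--(iii) in hand, the abstract theorem delivers that the algorithm's output is $\alpha$-minimum in its $\sim$-class, i.e.\ minimises $\sum_{x\in C}\len{x}\cdot r(x)$ over all prefix codes with the prescribed multiset of leaf weights---precisely the defining property of an optimal Huffman code.

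Equivalence with Huffman's algorithm is then a direct inspection of the recursion. At each step, the abstract algorithm selects the $k$ smallest weights with $k\equiv n\pmod{d-1}$ and $k\in\{2,\ldots,d\}$, which is exactly the textbook $d$-ary Huffman merge size, chosen so that every subsequent merge is a full $d$-ary merge. It then packages these into the depth-one tree $(\{0,\ldots,k-1\},i\mapsto a_i)$, whose image under $w$ is $\sum_i a_i$---the classical Huffman aggregate weight. This aggregate replaces the removed leaves in the queue, the recursion continues, and finally $\mu_W$ glues the depth-one trees back together into a single $d$-ary prefix code, yielding the very tree Huffman's algorithm would have produced.

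The main delicate point I anticipate is keeping $\alpha$ (the objective being minimised) cleanly distinguished from $w$ (the Eilenberg--Moore structure map aggregating subtrees); the two coincide on depth-one trees, which is what makes the abstract merge collapse to Huffman's weight-sum, but they differ in general. The identity
\begin{align*}
\alpha(\mu_W(C,r)) = \alpha(C,w\circ r) + w(C,\alpha\circ r)
\end{align*}
from Lemma \ref{lem:Huffman} is the decisive bookkeeping tool underlying the verification of axiom (iii); with that already in hand, the remainder of the argument is essentially routine.
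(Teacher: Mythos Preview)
Your proposal is correct and, for the equivalence with Huffman's procedure, follows essentially the same step-by-step inspection as the paper. You actually go further than the paper's own proof by explicitly verifying axioms (i) and (ii) for the $\alpha$-order before invoking the abstract correctness theorem; the paper argues only the equivalence and leaves these checks implicit. Your remark that only the $\sim$-restricted instance of axiom (i) is needed is apt, since the abstract correctness proof and its supporting lemma invoke axiom (i) only within a fixed $\sim$-class.
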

\begin{proof}
Take $X\subset \reals_+$ to be a finite multiset and sort the set $X$ in increasing order. For the binary case of Huffman codes (the $d$-ary version follows the same way), we always choose $k=2$. For the first step, let $a_0, a_1 \in X$ be the two smallest elements in the list. Form the object $(\{ 0, 1 \}, i \mapsto a_i) \in \CC X$. In the case $n=2$, this is the only remaining object in the list. Otherwise, we combined them into one element with the sum of the weights of $a_0$ and $a_1$ as the weight of the new element, exactly as the Huffman coding does.

For the case $n>2$, there are remaining elements in the set $X$. Take all remaining $a \in X \backslash \{a_0, a_1\}$ and replace $a$ by $\eta_X(a) \in \CC X$. We are left with $n-1$ elements of type $\CC X$. If we recursively call the algorithm in step 1, we are continually combining the least two elements in the remaining set with the elements weighted by $w$. Note by the weighting $w$, $w(\eta_X(a)) = a$ and on elements in $\CC X$, $w$ takes the sum of $r(x)'s$, exactly as Huffman coding does. Finally, this leaves us with a tree in $\CC^2 X$ where leaves have weights of the form $\eta_X(a_i)$. Denote this tree by $(D,s)$. Taking $\mu_X(D,S)$ gives our desired tree in $\CC X$. 
\end{proof}

\subsection{PIFO trees}

PIFO trees were introduced in \cite{Sivaraman16} as a model for programmable packet schedulers. In the recent work of \cite{MLFKK22a}, further work was done on PIFO trees giving a semantics that allows for certain embedding algorithms. The notion of a \emph{homomorphic embedding} was defined for the purpose determining when a PIFO tree could be represented by another PIFO tree and for finding an embedding if so. The embedding algorithm we consider takes an arbitrary PIFO tree and embeds it into a $d$-ary tree. This becomes a special case of the algorithm of \S\ref{sec:algorithm}, where we choose $w$ in the weighting $(W,w)$ to minimize the height of the target $d$-ary tree into which the source tree can embed.

For this application, we take $W=\naturals$ with weighting
\begin{align*}
w(C,r) &= \max_{x\in C}\len x + r(x).
\end{align*}
This gives an Eilenberg-Moore algebra $(W,w)$ for the monad $(\CC,\mu,\eta)$.
For $a\in W$ and $(C,r)\in\CC^2 W$ with $r(x)=(C_x,r_x)$, as before we have
\begin{align*}
\eta_W(a) &= (\{\eps\},\eps\mapsto a) &
\mu_W(C,r) &= (\set{xy}{x\in C,\ y\in C_x}, xy\mapsto r_x(y)),
\end{align*}
so
\begin{align*}
w(\eta_W(a)) &= w(\{\eps\},\eps\mapsto a)
= \max_{x\in\{\eps\}}\len x + (\eps\mapsto a)(x)
= \len\eps + a
= a,\\
w(\mu_W(C,r))
&= w(\set{xy}{x\in C,\ y\in C_x}, xy\mapsto r_x(y))
= \max_{x\in C}\max_{y\in C_x}\len{xy} + r_x(y)\\
&= \max_{x\in C}\max_{y\in C_x}\len{x} + \len{y} + r_x(y)
= \max_{x\in C}\len{x} + \max_{y\in C_x}\len{y} + r_x(y)\\
&= \max_{x\in C}\len{x} + w(C_x,r_x)
= \max_{x\in C}\len{x} + w(r(x))\\
&= w(C,w\circ r)
= w(\CC w(C,r)).
\end{align*}

For $(C,r),(D,s)\in\CC W$, let us define $(C,r)\le (D,s)$ if there is a bijective function $f:C\to D$ such that $r=s\circ f$ and
\begin{align*}
w(C,r)\le w(D,s).
\end{align*}
\begin{lemma}
$\mu_W:\CC^2 W\to\CC W$
and $\eta_W:W\to\CC W$ are
monotone with respect to $\le$.
\end{lemma}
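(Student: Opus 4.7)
The plan is to mimic the proof of the analogous lemma from the Huffman subsection, using the Eilenberg-Moore identity $w\circ\mu_W = w\circ\CC w$ verified just above in place of Lemma~\ref{lem:Huffman}.

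For $\eta_W$, the argument collapses to a one-line computation: if $a\le b$ in $\naturals$, both $\eta_W(a)=(\{\eps\},\eps\mapsto a)$ and $\eta_W(b)=(\{\eps\},\eps\mapsto b)$ have the same singleton code matched by the identity bijection, and $w(\eta_W(a)) = a \le b = w(\eta_W(b))$ directly from the defining formula for $w$.

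For $\mu_W$, suppose $(C,r),(D,s)\in\CC^2 W$ with $(C,r)\le(D,s)$. Unfolding axiom (iii) together with the PIFO definition of $\le$ on $\CC W$ supplies a bijection $f:C\to D$ matching the top level (so that $w\circ r = w\circ s\circ f$) together with the scalar inequality $w(C,w\circ r)\le w(D,w\circ s)$. The Eilenberg-Moore identity then yields
\begin{align*}
w(\mu_W(C,r)) = w(\CC w(C,r)) = w(C,w\circ r) \le w(D,w\circ s) = w(\CC w(D,s)) = w(\mu_W(D,s)),
\end{align*}
which is exactly the $w$-comparison half of $\mu_W(C,r)\le\mu_W(D,s)$ in $\CC W$. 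The remaining requirement is a bijection on the flattened codes; the natural candidate is $F(xy) = f(x)g_x(y)$, pasted from $f$ and subtree-level bijections $g_x:C_x\to D_{f(x)}$ identifying $r_x$ with $s_{f(x)}\circ g_x$.

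The step I expect to need the most care is producing those $g_x$. Axiom (iii), read literally, only equates the aggregated scalars $w(r(x))$ and $w(s(f(x)))$, not the leaf-weight maps $r_x$ and $s_{f(x)}$ themselves. So I anticipate either reading the preorder on $\CC^2 W$ subtree-wise, so that $(C,r)\le(D,s)$ forces $r(x)\le s(f(x))$ in $\CC W$ for each $x$ and the $\sim$-clause at that level delivers the $g_x$ directly, or — if that reading proves too strong — appealing to the fact that in the algorithm's recursive use of $\mu_W$ the subtree structure is carried along inductively, so that the needed $g_x$ come for free from the inductive construction rather than from axiom (iii) alone.
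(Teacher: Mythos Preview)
Your argument matches the paper's own proof line for line on the parts the paper actually writes down: for $\eta_W$ the one-line computation $w(\eta_W(a))=a\le b=w(\eta_W(b))$, and for $\mu_W$ the chain
\[
w(\mu_W(C,r))=w(\CC w(C,r))\le w(\CC w(D,s))=w(\mu_W(D,s))
\]
obtained from axiom (iii) together with the Eilenberg-Moore identity $w\circ\mu_W=w\circ\CC w$.

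Where you diverge is that you go \emph{further} than the paper. The paper's proof stops after the scalar inequality above; it never constructs, or even mentions, a bijection on the flattened codes witnessing the $\sim$-clause of $\mu_W(C,r)\le\mu_W(D,s)$. Your observation that axiom (iii), read literally, only yields $w(r(x))=w(s(f(x)))$ rather than a subtree-level identification $r(x)\sim s(f(x))$ is correct, and the paper does not address it. So the extra paragraph you spend on the $g_x$ is careful analysis of a point the paper simply elides; your two proposed resolutions (strengthening the reading of the $\CC^2 W$ order, or relying on the algorithm's inductive bookkeeping) are reasonable ways to patch it, but neither is what the paper does---the paper just omits the issue.
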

\begin{proof}
For $\eta_W$, if $a\le b$, then $w(\eta_W(a)) = a \le b = w(\eta_W(b))$.

For $\mu_W$, suppose $(C,r),(D,s)\in\CC^2 W$ and $(C,r)\le(D,s)$. According to axiom (iii),\begin{align*}
(C,r)\le(D,s)\ &\Iff\ \CC w(C,r)\le \CC w(D,s)\\
&\Iff\ w(\CC w(C,r))\le w(\CC w(D,s)).
\end{align*}
Then
\begin{align*}
w(\mu_W(C,r))
&= w(\CC w(C,r))
\le w(\CC w(D,s))
= w(\mu_W(D,s)).
\qedhere
\end{align*}
\end{proof}

\begin{theorem}
The algorithm of \S\ref{sec:algorithm} for the algebra $(\naturals, w)$ and ordering relation $\le$ defined by $w$ is equivalent to determining whether an embedding of a PIFO tree in a bounded $d$-ary tree exists and finding the embedding if so.
\end{theorem}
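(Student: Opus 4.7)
The plan is to invoke the general correctness theorem of Section~\ref{sec:correctness} and interpret its conclusion in the PIFO setting. Two things are needed: (a) the remaining axioms of Section~\ref{sec:axioms} must be verified for this $(W,w)$ and $\le$, and (b) optimality with respect to $w$ must be identified with minimality of embedding height.

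For (a), axiom (iii) is the content of the lemma immediately above. Axiom (i) is immediate: if $f:C\to D$ is a length-nondecreasing bijection and $r\le s\circ f$ pointwise, then $\len x + r(x)\le\len{f(x)} + s(f(x))$ for every $x\in C$, so $w(C,r)\le w(D,s)$. For axiom (ii), suppose $\len x\le\len y$ and $r(x)\le r(y)$. Swapping the values of $r$ at $x$ and $y$ leaves every other leaf contribution to the maximum unchanged, while the pair of contributions at $x$ and $y$ changes from $\{\len x+r(x),\ \len y+r(y)\}$ to $\{\len x+r(y),\ \len y+r(x)\}$; both of these latter values are bounded above by $\len y + r(y)$, so the overall maximum cannot grow, giving $(C,s)\le(C,r)$.

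For (b), a PIFO embedding in the sense of \cite{MLFKK22a} realizes a source PIFO tree inside a $d$-ary target tree by assigning subtrees of the source to leaves of the target, and the height of the resulting tree equals $\max_x \len x + h(x)$, where $x$ ranges over leaves of the target and $h(x)$ is the height of the subtree placed at $x$. Writing $M$ for the multiset of heights of the subtrees to be placed, finding an embedding therefore reduces to choosing a $d$-ary prefix code $C$ and a bijection $r:C\to M$, and the height of that embedding is exactly $w(C,r)$. An embedding into a $d$-ary tree of bounded height $h$ thus exists if and only if the minimum value of $w(C,r)$ over all such choices is at most $h$.

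By the correctness theorem applied with the axioms verified in (a), the algorithm returns an optimal $(C,r)$ for the input multiset $M$, i.e.\ one whose $w$-value is minimum among all $d$-ary prefix codes with leaf multiset $M$. Hence the returned $(C,r)$ is itself a minimum-height embedding, and comparing $w(C,r)$ to a target height $h$ decides existence. The main obstacle is pinning down the precise correspondence between the tree-shaped source objects of \cite{MLFKK22a} and the multiset of subtree heights to be placed at the leaves of the target; once that identification is made, the theorem is a direct specialization of the abstract correctness result already proved.
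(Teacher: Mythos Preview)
The paper states this theorem without proof, so there is no argument to compare against directly. Your plan---verify the remaining axioms for $(\naturals,w)$, invoke the general correctness theorem, and then identify $w$-optimality with minimum embedding height---is the natural route and is essentially sound.

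One technical wrinkle in part (a): your check of axiom (i) establishes $w(C,r)\le w(D,s)$, but the order $\le$ on $\CC W$ defined in this section additionally requires $(C,r)\sim(D,s)$, i.e.\ a bijection $g$ with $r=s\circ g$. Under the general hypothesis $r\le s\circ f$ pointwise with a strict inequality somewhere, $\sim$ fails, so axiom (i) in its full strength is not actually satisfied by this $\le$ (the same remark applies to the Huffman instance). This is a loose end in the paper's axiomatics rather than in your argument: the only invocation of axiom (i), in Lemma~\ref{lem:km}, uses the special cases with $r=s\circ f$, where $\sim$ is automatic and your inequality computation suffices. Your verification of axiom (ii) is correct.

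For part (b), you have correctly isolated the substantive step: one must spell out precisely why the homomorphic-embedding problem of \cite{MLFKK22a} reduces to choosing a $d$-ary prefix code $C$ together with a bijection $r$ from $C$ onto the multiset of subtree heights, with resulting target height $w(C,r)=\max_{x\in C}\len x+r(x)$. Once that reduction is made explicit, the conclusion---that the algorithm returns a minimum-height embedding and that comparison with a bound $h$ decides existence---follows immediately from the general correctness theorem.
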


\section{Conclusion}

We have presented a generalized Huffman algorithm and shown that two known algorithms, Huffman codes and embedding of PIFOs trees, can be derived as special cases. The PIFO embedding algorithm was introduced in \cite{MLFKK22a} and observed to be very similar to the usual combinatorial algorithm for optimal Huffman codes, albeit based on a different algebraic structure. This suggested the common generalization presented in this paper.

Our generalized algorithm exploits the monadic structure of prefix codes, which allows a more algebraic treatment of the Huffman algorithm than the usual combinatorial approaches. The two applications fit naturally in the categorical setting by choosing specific Eilenberg-Moore algebras for each one. It is possible that other greedy algorithms might fit into this framework as well.

%\bibliographystyle{IEEEbib}
%\bibliography{PIFO,Cat}

\begin{thebibliography}{10}

\bibitem{MLFKK22a}
Anshuman Mohan, Yunhe Liu, Nate Foster, Tobias Kappé, and Dexter Kozen,
\newblock ``Formal abstractions for packet scheduling,''
\newblock Tech. {R}ep.~{\url{http://arxiv.org/abs/2211.11659}}, Cornell
  University, November 2022.

\bibitem{CoverThomas06}
Thomas~M. Cover and Joy~A. Thomas,
\newblock {\em Elements of Information Theory},
\newblock Wiley, second edition, 2006.

\bibitem{Sivaraman16}
Anirudh Sivaraman, Suvinay Subramanian, Mohammad Alizadeh, Sharad Chole,
  Shang{-}Tse Chuang, Anurag Agrawal, Hari Balakrishnan, Tom Edsall, Sachin
  Katti, and Nick McKeown,
\newblock ``Programmable packet scheduling at line rate,''
\newblock in {\em {SIGCOMM}}, 2016.

\bibitem{Alcoz20}
Albert~Gran Alcoz, Alexander Dietm{\"u}ller, and Laurent Vanbever,
\newblock ``{SP-PIFO}: Approximating push-in first-out behaviors using
  strict-priority queues,''
\newblock in {\em {NSDI}}, 2020.

\bibitem{AspertiLongo91}
Andrea Asperti and Giuseppe Longo,
\newblock {\em Categories, Types and Structures: An introduction to category
  theory for the working computer scientist},
\newblock Foundations of Computing. MIT Press, 1991.

\bibitem{BarrWells02}
Michael Barr and Charles Wells,
\newblock {\em Toposes, Triples and Theories}, vol. 278 of {\em Grundlehren der
  mathematischen Wissenschaften},
\newblock Springer, 2013.

\bibitem{BarrWells90}
Michael Barr and Charles Wells,
\newblock {\em Category Theory for Computing Science},
\newblock Prentice Hall, 1990.

\bibitem{Adamek09}
Ji\v{r}\'{i} Ad\'{a}mek, Horst Herrlich, and George~E. Strecker,
\newblock {\em Abstract and concrete categories},
\newblock Dover Publications, 2009.

\bibitem{Moggi91}
Eugenio Moggi,
\newblock ``Notions of computation and monads,''
\newblock {\em Inf. and Comp.}, vol. 93, no. 1, pp. 55--92, 1991.

\bibitem{Wadler92}
Philip Wadler,
\newblock ``Comprehending monads,''
\newblock {\em Mathematical Structures in Computer Science}, vol. 2, pp.
  461--493, 1992.

\bibitem{Wadler95}
Philip Wadler,
\newblock ``Monads for functional programming,''
\newblock in {\em Advanced Functional Programming: 1st Int. School on Advanced
  Functional Programming Techniques}, Johan Jeuring and Erik Meijer, Eds., vol.
  925 of {\em Lecture Notes in Computer Science}, pp. 24--52. Springer-Verlag,
  1995.

\end{thebibliography}

\end{document}